\theoremstyle{plain}
\newtheorem{corollary}{Corollary}
\newtheorem*{theorem}{Theorem}
\theoremstyle{remark}
\newtheorem{remark}{Remark}
\theoremstyle{definition}
\newtheorem{example}{Example}
\DeclareMathOperator{\artanh}{artanh}
\newcommand{\R}{\mathbb{R}}
\newcommand{\Z}{\mathbb{Z}}
\begin{document}

\title{Factorization Problem with Intersection}
\thanks{}
\author{R. A. Atnagulova, O. V. Sokolova}
\address[R. A. Atnagulova]{M. Akmullah Bashkir State Pedagogical University}
\email{rushania2009@yandex.ru}
\address[O. V. Sokolova]{Lomonosov Moscow State University, Department of Mechanics and Mathematics}
\email{Olga.Efimovskaya@gmail.com}

\begin{abstract}
We propose a generalization of the factorization method to the case when $\mathcal{G}$ is a finite dimensional Lie algebra such that $\mathcal{G}=\mathcal{G}_0\oplus M \oplus  N$ (direct sum of vector spaces), where $\mathcal{G}_0$ is a subalgebra in $\mathcal{G}$, $M, N$ are $\mathcal{G}_0$-modules, and $\mathcal{G}_0 +M$, $\mathcal{G}_0 +N$ are subalgebras in  $\mathcal{G}$. In particular, we consider the case when $\mathcal{G}$ is a $\Z$-graded Lie algebra.
Using this generalization, we construct some top-like systems related to the algebra $so(3,1)$.
According to the general scheme, these systems can be reduced to linear systems with variable coefficients.
For the top-like systems polynomial first integrals and infinitesimal symmetries are found.

{\bf Key words:}
{factorization method, Lie algebra, integrable dynamical systems.}

\end{abstract}
\maketitle
\thispagestyle{empty}


\section{Introduction}

The classical factorization method \cite{kon, TyanSh, golod, golsok} (another name is the Adler-Konstant-Symes scheme) allows to integrate the ODE system of the following special form:  
\begin{equation}\label{volchok1}
U_{t}=[U_{+},U], \quad U(0)=U_{0}.
\end{equation}
Here $U({t})$ is a  $\mathcal{G}$-valued function, where $\mathcal{G}$ is a Lie algebra, which is a direct sum of vector spaces  
$\mathcal{G}_{+}$ and $\mathcal{G}_{-}$ being subalgebras in  $\mathcal{G}$. By $U_{+}$ we denote the projection of $U$
onto $\mathcal{G}_{+}$. 
For simplicity we assume that   $\mathcal{G}$ is embedded into the matrix algebra.

The solution of problem \eqref{volchok1} is defined by the formula  
\begin{equation}\label{volchok2}
U(t)=A(t)U_{0}A^{-1}(t).
\end{equation}
The matrix $A(t)$ in \eqref{volchok2} is defined as a solution of the factorization problem  
\begin{equation}\label{volchok3}
A^{-1}B=\exp(-U_{0}t),\quad A \in G_{+},\quad B \in G_{-},
\end{equation}
where $G_{+}$ and $G_{-}$  are the Lie groups of the algebras $\mathcal{G}_{+}$ and
$\mathcal{G}_{-}$, correspondingly. If $\mathcal{G}_{-}$ is an ideal, 
then the factorization problem can be solved explicitly:  $A=\exp ((U_{0})_{+}t)$, $B=A \exp(-U_{0}t)$. In the case when
 $G_{+}$ and $G_{-}$ are algebraic groups, the conditions  
$A \in G_{+}$ and $A \exp(-U_{0}t) \in G_{-}$ yield a system of algebraic equations.  From this system the matrix $A$ for small $t$ is uniquely defined.

It was shown in \cite{golsok} by Golubchik and Sokolov that the classical factorization problem can be reduced to a system of linear ordinary differential equations with variable coefficients. 

Moreover, in \cite{golsok} the factorization method was generalized to the case 
\begin{equation}\label{volchok7}
\mathcal{G}=V_{1}\oplus V_{2},
\end{equation}
where $V_{1}$, $V_{2}$ are vector subspaces  in $\mathcal{G}_{+}$ and
$\mathcal{G}_{-}$, correspondingly. It was shown that if  
\begin{equation}\label{volchok8}
[\mathcal{G}_{+}\cap\mathcal{G}_{-},\, V_{i}]\subset V_{i}, \qquad i=1,2,
\end{equation}
then the integration of equation \eqref{volchok1}, where <<+>> means the projection onto 
$V_{1}$ parallel to $V_{2}$, can be also reduced to solving of a system of linear ODEs with variable coefficients. 

Recall the Golubchik-Sokolov construction.  
In \cite{golsok} the authors have considered a factorization problem with {\it right} logariphmic derivatives of  $A$ and $B$:
\begin{equation}\label{volchok9}
A^{-1}B=Z(t), \quad A_{t}A^{-1}\in V_{1}, \quad B_{t}B^{-1}\in
V_{2}, \quad Z(0)=A(0)=B(0)=E,
\end{equation}
where $V_i$ are vector spaces satifying \eqref{volchok7}.
Let $U(t)=A(t)q(t)A^{-1}(t)$, where $q(t)=-Z_{t}Z^{-1}$. It is easy to see that this function satisfies the following equation:
\begin{equation}\label{volchok12}
U_{t}=[U_{+},U]+Aq_{t}A^{-1}.
\end{equation}
Notice that if  $Z(t)=\exp(-U_{0}t)$ then $q(t)=U_{0}$, $q_{t}=0$ and equation   \eqref{volchok12} 
coincides with  \eqref{volchok1}.
Thus, any solution of this factorization problem yields a solution of non-linear ODE \eqref{volchok1}.

Consider now the following factorization problem: 
\begin{equation}\label{volchok13}
\alpha^{-1}\beta=Z(t), \quad \alpha^{-1}\alpha_{t}\in V_{1}, \quad
\beta^{-1}\beta_{t}\in V_{2}, \quad Z(0)=\alpha(0)=\beta(0)=E.
\end{equation}
Here the both logariphmic derivatives are {\it left}. In general, this problem does not related to equations of the \eqref{volchok1} type. But it can be reduced to a linear equation with variable coefficients. Namely, in \cite{golsok} have been introduced the linear mapping  $L(t):V_{1}\rightarrow V_{1}$, defined by 
$$L(t)(v)=(Z^{-1}(t)vZ(t))_{+}.$$
Since  $L(0)$ is the identity mapping, $L(t)$ is invertible for small $t$. It can be proved that the solution $\alpha$ of the linear equation 
\begin{equation}\label{volchok14}
\alpha_{t}=-\alpha L^{-1}(t)((Z^{-1}Z_{t})_{+}), \qquad \alpha(0)=E
\end{equation}
and the function  
\begin{equation}\label{volchok10}
\beta=\alpha Z(t),
\end{equation}
define the unique solution of the factorization problem  \eqref{volchok13}.

The last step of the construction allows us 
to establish a link between the two factorization problems 
under additional condition \eqref{volchok8}.
Namely, let $V_{1} \subset \mathcal{G}_{+}$, $V_{2} \subset \mathcal{G}_{-}$, where $\mathcal{G}_{+}$ and 
$\mathcal{G}_{-}$ are Lie subalgebras of the algebra $\mathcal{G}$ such that  $\mathcal{G}_{+}\cap\mathcal{G}_{-}=\mathcal{G}_{0}\not = {\{0\}}$. Then the solutions of the factorization problems  
\eqref{volchok9} and \eqref{volchok13} satisfy the same factorization problem:
\begin{equation}\label{volchok15}
A^{-1}B=Z(t), \quad A \in G_{+}, \quad B\in G_{-},\quad A(0)=B(0)=E,
\end{equation}
where $G_{+}$ and $G_{-}$ are the Lie groups of the algebras  $\mathcal{G}_{+}$ and 
$\mathcal{G}_{-}$. 
Since
$\mathcal{G}_{0}\neq\{0\}$, the solution of problem \eqref{volchok15} is not unique. 
Let $\alpha, \beta$ be the solution of the factorization problem \eqref{volchok13}. Because it also satisfies  \eqref{volchok15}, all solutions of \eqref{volchok15} are  given by  
\begin{equation}\label{volchok16}
A=H \alpha, \quad B=H\beta, \quad H(0)=E,
\end{equation}
where $H$ is an arbitrary element of the Lie group 
$G_{0}$ of the Lie algebra $\mathcal{G}_{0}$. 

In \cite{golsok} it was shown that  $A,B$ satisfy the factorization problem  \eqref{volchok9} if
$H$ is the solution of the following linear equation:
\begin{equation}\label{volchok17}
H_{t}=-H((\alpha_{t}\alpha^{-1})_{-}+(\beta_{t}\beta^{-1})_{+}), \quad
H(0)=E.
\end{equation}

Thus we solve equation \eqref{volchok14} for $\alpha$, find function $\beta$ from  \eqref{volchok10} and solving equation  \eqref{volchok17}, we determine the solution  $A,B$ of \eqref{volchok9} from \eqref{volchok16}. Now we can find the solution of equation   \eqref{volchok12} by formula (\ref{volchok2}).

This paper is organized as follows. In Section 2 we generalize the factorization method to the case 
when  $\mathcal{G}$ is a finite dimensional Lie algebra such that   
$\mathcal{G}=\mathcal{G}_0\oplus M \oplus  N$ (direct sum of vector spaces), 
where $\mathcal{G}_0$ is a subalgebra in $\mathcal{G}$,  $M, N$ are $\mathcal{G}_0$-modules, 
and  $\mathcal{G}_0 +M$, $\mathcal{G}_0 +N$ are subalgebras in  $\mathcal{G}$.  

The graded Lie algebras provide an important special class for which our generalization is applicable. 
In Section 3 we construct some top-like systems related to the Lie algebras  $sl(2)$ and  $so(3,1)$. 
According to the general scheme, these systems can be reduced to linear systems of ODEs with variable coefficients.  
For all these systems we find polynomial first integrals and infinitesimal symmetries and show that the systems can be integrated in quadratures by the Lie algorithm. But they do not satisfy the Painlev\'e test because of movable branching points.

\section{General construction}

Suppose we have a decomposition  
$$\mathcal{G}=\mathcal{G}_0\oplus M \oplus  N$$ 
of a finite dimensional Lie algebra  $\mathcal{G}$ over $\R$ into a vector space direct sum of a Lie subalgebra   $\mathcal{G}_0$ and two vector spaces  $M, N$ such that 
\begin{itemize}
\item $M, N$ are $\mathcal{G}_0$-modules;  
\item $\mathcal{G}_0 +M$, $\mathcal{G}_0 +N$ are subalgebras in $\mathcal{G}$.
\end{itemize}

\begin{theorem}
Let a linear operator $R$ is defined by the formula  
\begin{equation}\label{linop}
R(q)=\alpha_{-1}q^{-1}+\alpha_{0}q^{0}+\alpha_{1}q^{1},
\end{equation}
where $q=q^{-1}+q^{0}+q^{1}$, $q^{-1}\in N$, $q^{0}\in \mathcal{G}_0$, $q^{1}\in
M$, $\alpha_{-1},\alpha_{0}, \alpha_{1} \in \R.$ Then the equation 
\begin{equation}\label{eq1}
q_{t}=[R(q),q], \quad q|_{t=0}=q_{0},
\end{equation}
can be reduced to a system of linear ODEs with variable coefficients  by the Golubchik-Sokolov construction (see Introduction).
\end{theorem}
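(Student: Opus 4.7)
The plan is to generalize the Golubchik--Sokolov factorization scheme recalled in the Introduction to the three-term decomposition $\mathcal{G}=\mathcal{G}_0\oplus M\oplus N$. The natural candidates for two overlapping subalgebras of $\mathcal{G}$ are $\mathcal{G}_+:=\mathcal{G}_0\oplus M$ and $\mathcal{G}_-:=\mathcal{G}_0\oplus N$; their intersection is precisely $\mathcal{G}_0$, so we land in the situation treated in the last paragraphs of the Introduction (factorization problem \eqref{volchok15} with the $H\in G_0$-ambiguity).

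As a first step I would normalize the operator $R$. Since $[q,q]=0$, one has $[R(q),q]=[(\alpha_{-1}-\alpha_0)q^{-1}+(\alpha_1-\alpha_0)q^{1},q]$, so the parameter $\alpha_0$ drops out of equation \eqref{eq1} and it is enough to treat the two-parameter family with $\alpha_0=0$. In the degenerate cases $\alpha_{-1}=0$ or $\alpha_1=0$, equation \eqref{eq1} becomes, after a time rescaling, $q_t=[q^{1},q]$ or $q_t=[q^{-1},q]$ respectively. The subspace pairs $(V_1,V_2)=(M,\mathcal{G}_0\oplus N)$ and $(V_1,V_2)=(N,\mathcal{G}_0\oplus M)$ both sit inside $\mathcal{G}_+$ and $\mathcal{G}_-$ correspondingly, give a direct-sum decomposition of $\mathcal{G}$, and satisfy $[\mathcal{G}_0,V_i]\subset V_i$ because $M,N$ are $\mathcal{G}_0$-modules. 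Hence conditions \eqref{volchok7}--\eqref{volchok8} hold and the Golubchik--Sokolov construction from the Introduction applies verbatim, reducing the equation to the linear ODE \eqref{volchok14} and the auxiliary linear equation \eqref{volchok17}.

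For the genuinely two-parameter case I would set up a factorization problem of the form
$$
\alpha^{-1}\beta=Z(t),\qquad \alpha^{-1}\alpha_t\in\mathcal{G}_+,\qquad\beta^{-1}\beta_t\in\mathcal{G}_-,\qquad \alpha(0)=\beta(0)=E,
$$
with $Z(t):=\exp\bigl(-tR(q_0)\bigr)$, and then put $q(t):=\alpha(t)\,q_0\,\alpha(t)^{-1}$. Differentiating $\beta=\alpha Z$ and projecting along the three summands of $\mathcal{G}=\mathcal{G}_0\oplus M\oplus N$ gives three coupled conditions; as in the derivation of \eqref{volchok14}, the $M$-component of $\alpha^{-1}\alpha_t$ is recovered from $Z^{-1}Z_t$ via an invertible (for small $t$) operator $L(t)$ built from conjugation by $Z(t)$ followed by projection onto $M$, whereas the $\mathcal{G}_0$-component of $\alpha^{-1}\alpha_t$ is singled out by an additional linear constraint of the type \eqref{volchok17}, calibrated to the ratio $\alpha_{-1}/\alpha_1$. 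This yields a linear system of ODEs with variable coefficients for $\alpha$, and $\beta=\alpha Z$ is then explicit.

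The principal obstacle is precisely the ambiguity caused by the nontrivial intersection $\mathcal{G}_0$: the factorization \eqref{volchok15} is not unique, and the freedom $A=H\alpha$, $B=H\beta$ must be used to guarantee that $q(t)=\alpha q_0\alpha^{-1}$ satisfies \eqref{eq1} with \emph{exactly} the prescribed coefficients $\alpha_{-1},\alpha_0,\alpha_1$ and not merely with some other admissible combination. The main computational step will therefore be to verify that the linear equation determining $H$ depends linearly on the parameter $\alpha_{-1}/\alpha_1$ (after normalization $\alpha_0=0$), and that, with this $H$, the resulting Lax equation for $q$ coincides on the nose with \eqref{eq1}.
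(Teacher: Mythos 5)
Your reduction of the degenerate cases ($\alpha_{-1}=\alpha_0$ or $\alpha_1=\alpha_0$ after the correct observation that $\alpha_0$ drops out of $[R(q),q]$) is fine, but those cover only special parameter values, and for the generic case your plan has a genuine gap. The Golubchik--Sokolov scheme recalled in the Introduction integrates $q_t=[q_+,q]$ where $q_+$ is an \emph{idempotent} projection onto $V_1$ parallel to $V_2$; since $[q,q]=0$, the reachable equations inside $\mathcal{G}$ itself are $q_t=[aP(q),q]$ with $P$ a projection, i.e. the spectrum of the relevant operator on $(N,\mathcal{G}_0,M)$ takes at most two distinct values up to an overall shift --- e.g. $(0,0,1)$ for $(V_1,V_2)=(M,\mathcal{G}_0\oplus N)$ or $(0,1,1)$ for $(V_1,V_2)=(\mathcal{G}_0\oplus M,N)$ --- whereas generic $R$ has three. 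The freedom \eqref{volchok16}, $A=H\alpha$, $B=H\beta$ with $H\in G_0$, only parametrizes the solutions of the \emph{fixed} problem \eqref{volchok15}; it cannot change which projection appears in the resulting Lax equation, so no "calibration of $H$ to the ratio $\alpha_{-1}/\alpha_1$" can produce \eqref{eq1} with three distinct coefficients. What your sketch would actually require is a new, coupled factorization problem in which the $\mathcal{G}_0$-components of the two logarithmic derivatives are tied by the relation $(\alpha_0-\alpha_1)(A_tA^{-1})^0=(\alpha_0-\alpha_{-1})(B_tB^{-1})^0$, together with a proof of its unique solvability and of the reduction to linear ODEs --- but that is precisely the content of the theorem, and you defer it to "the main computational step" without an argument.

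There are also two concrete technical slips. First, from $B=AZ$ one gets $A_tA^{-1}-B_tB^{-1}=A(-Z_tZ^{-1})A^{-1}$, and matching this with $[R(q),q]=[M_+(q),q]=[M_-(q),q]$, where $M_\pm(q)\in\mathcal{G}_0+M$ resp. $\mathcal{G}_0+N$ and $M_+(q)-M_-(q)=(\alpha_1-\alpha_{-1})q$, forces $Z(t)=\exp(-(\alpha_1-\alpha_{-1})q_0t)$; your choice $Z=\exp(-tR(q_0))$ fails because $AR(q_0)A^{-1}\ne R(AqA^{-1})$. Second, you impose \emph{left} logarithmic derivative constraints as in \eqref{volchok13} while setting $q=\alpha q_0\alpha^{-1}$; but then $q_t=[\alpha_t\alpha^{-1},q]$ with $\alpha_t\alpha^{-1}=\alpha(\alpha^{-1}\alpha_t)\alpha^{-1}$ lying in a conjugated subspace, and the paper explicitly notes that problem \eqref{volchok13} is \emph{not} related to equations of type \eqref{volchok1} --- the Lax form requires the right-derivative problem \eqref{volchok9}. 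The paper's own proof sidesteps all of this with a purely algebraic device you are missing: it takes $\bar{\mathcal{G}}=\mathcal{G}\oplus\mathcal{G}\oplus\mathcal{G}$, with $\bar{\mathcal{G}}_+$ the diagonal and $\bar{M}=\{(a,b,c)\,|\,a\in\mathcal{G}_0+N,\ b\in\mathcal{G}_0+M,\ c\in N+M\}$, checks \eqref{volchok7}--\eqref{volchok8} there, and observes that for $\bar q=(\alpha_1q,\alpha_{-1}q,\alpha_0q)$ the projection onto the diagonal parallel to $\bar M$ is exactly $(R(q),R(q),R(q))$, so \eqref{eq1} is literally a component of the already-linearizable equation \eqref{eq2} and the original theorem of \cite{golsok} applies verbatim --- no new factorization problem with intersection needs to be solved.
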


\begin{proof} To apply the construction from \cite{golsok}, we take  for
$\bar{\mathcal{G}}$ the direct sum of three copies of the algebra $\mathcal{G}$, i.e. 
$$\bar{\mathcal{G}}=\mathcal{G} \oplus \mathcal{G} \oplus \mathcal{G}.$$

For $\bar {\mathcal{G}}_{+}$ we take the diagonal subalgebra in  $\bar{\mathcal{G}}$:
$$\bar{\mathcal{G}}_{+} = \{(a, a, a)|\, a \in \mathcal{G} \}, $$
and for  $\Bar{\mathcal{G}}_{-}$ the following subalgebra:
$$\bar{\mathcal{G}}_{-}= \{(a, b, c)|\, a \in \mathcal{G}_0+N, b\in \mathcal{G}_0+M, c\in \mathcal{G} \}.$$

Consider the vector subspace 
$$\bar{M}=\{(a,b,c)|\, a \in \mathcal{G}_0+N, b\in \mathcal{G}_0+M, c\in N+M \}.$$
Then condition (\ref{volchok7}) holds  for $\bar{\mathcal{G}}$ and $\bar{\mathcal{G}}_{+}, \bar{M}$,
i.e. $\bar{\mathcal{G}}=\bar{\mathcal{G}}_{+} \oplus \bar{M}$. Obviously, for the subalgebra $\bar{\mathcal{G}}_{+}$  condition (\ref{volchok8}) is fulfilled. It is easy to verify that for $\bar{M}$ we have 
$$[(\bar{\mathcal{G}}_{+}\cap \bar{\mathcal{G}}_{-}), \bar{M}] \subset \bar{M}. $$ 
Thus it follows from  \cite{golsok} that the equation
\begin{equation}\label{eq2}
\bar{q_{t}}=[\bar{q}_{+}, \bar{q}], \quad \bar{q}|_{t=0}=\bar{q_{0}},
\end{equation}
where $\bar{q} \in \bar{\mathcal{G}}, \, \bar{q}_{+}$ is the projection $\bar{q}$ onto $\bar{\mathcal{G}}_{+}$ parallel to  
$\bar{M}$, can be reduced to a system of linear ODEs with variable coefficients.

To complete the proof it is sufficient to choose $\bar{q}=(\alpha_{1} q, \alpha_{-1} q, \alpha_{0}q)$ where
$q \in \mathcal{G}$. Notice, that in this case $\bar{q}_{+}=(R(q), R(q), R(q))$ where the operator $R$ is defined by (\ref{linop}). Now one can see that for each component, equation  (\ref{eq2}) has the same form  $q_{t}=[R(q),q]$. This coincides with (\ref{eq1}).
\end{proof}

\begin{corollary} Let $\mathcal{G}=\bigoplus\limits_{i=-k}^{k} \mathcal{G}_{i}$ be a
$\Z$-graded Lie algebra and   $N=\bigoplus\limits_{i=-k}^{-1} \mathcal{G}_{i}$,
$M=\bigoplus\limits_{i=1}^{k} \mathcal{G}_{i}$.
Then all conditions of Theorem are fulfilled and the equation  
\begin{equation}\label{7}
q_{t}=\left[\alpha_{-1}\sum_{i=-k}^{-1}q_{i}+\alpha_{0}q_{0}+\alpha_{1}\sum_{i=1}^{k}
q_{i}, \; q\right] \end{equation} can be reduced to a system of linear differential equations with variable coefficients. 
\end{corollary}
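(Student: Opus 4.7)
The plan is to verify that the $\Z$-grading automatically supplies all the hypotheses of the preceding Theorem, and then to identify the linear operator $R$ from \eqref{linop} with the bracket operand in \eqref{7}.

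First I would check that the decomposition $\mathcal{G}=\mathcal{G}_0\oplus M\oplus N$ with the prescribed $M,N$ is a vector space direct sum, which is immediate from the definition of a $\Z$-grading. Next I would use the defining property $[\mathcal{G}_i,\mathcal{G}_j]\subset\mathcal{G}_{i+j}$ three times: (i) taking $i=j=0$ shows $\mathcal{G}_0$ is a Lie subalgebra; (ii) taking $i=0$ and $j$ of any fixed sign shows $[\mathcal{G}_0,M]\subset M$ and $[\mathcal{G}_0,N]\subset N$, so $M$ and $N$ are $\mathcal{G}_0$-modules; (iii) taking $i,j\ge 0$ gives $i+j\ge 0$, hence $[\mathcal{G}_0+M,\mathcal{G}_0+M]\subset \mathcal{G}_0+M$, and symmetrically for negative indices $\mathcal{G}_0+N$ is closed under the bracket. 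This establishes all hypotheses required to invoke the Theorem.

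Having verified the assumptions, I would then decompose an arbitrary $q\in\mathcal{G}$ according to the grading as $q=\sum_{i=-k}^{k}q_i$, group the terms as
\[
q^{-1}=\sum_{i=-k}^{-1}q_i\in N,\qquad q^{0}=q_0\in\mathcal{G}_0,\qquad q^{1}=\sum_{i=1}^{k}q_i\in M,
\]
and observe that with this identification the operator $R$ defined by \eqref{linop} acts precisely as
\[
R(q)=\alpha_{-1}\sum_{i=-k}^{-1}q_i+\alpha_0 q_0+\alpha_1\sum_{i=1}^{k}q_i.
\]
Consequently equation \eqref{7} is equation \eqref{eq1} for this choice of $R$, and the Theorem yields the reduction to a linear ODE system with variable coefficients.

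I do not expect any real obstacle; the corollary is a direct specialization of the Theorem once the grading is used to furnish the three structural properties of $\mathcal{G}_0$, $M$, and $N$. The only mildly subtle step is the closure of $\mathcal{G}_0+M$ and $\mathcal{G}_0+N$ under the bracket, which hinges on the sign preservation $i,j\ge0\Rightarrow i+j\ge0$ (and its negative counterpart) provided by the grading.
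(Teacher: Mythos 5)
Your proposal is correct and follows exactly the route the paper intends: the paper states the Corollary without proof precisely because it is the direct specialization you carry out, with the grading property $[\mathcal{G}_i,\mathcal{G}_j]\subset\mathcal{G}_{i+j}$ furnishing the three structural hypotheses of the Theorem and the grouping $q^{-1}=\sum_{i<0}q_i$, $q^0=q_0$, $q^1=\sum_{i>0}q_i$ identifying \eqref{7} with \eqref{eq1}. Your sign-closure argument implicitly uses that $\mathcal{G}_{i+j}=\{0\}$ when $|i+j|>k$, which is the standard convention and causes no gap.
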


\begin{remark}
It follows from (\ref{eq1}) that the traces of $q^i, \, i=1,2,... $ are polynomial first integrals for dynamical system (\ref{eq1}).  Since $q$ does not depend on any spectral parameter, the number of these integrals is not enough for the complete integrability of (\ref{eq1}). 
\end{remark}

\begin{remark} For an element $g \in \mathcal{G}$ such that $[g, \mathcal{G}_0] \subseteq  \mathcal{G}_0, [g, M] \subseteq  M, [g, N] \subseteq  N$ we have  $[g,R(q)]=R([g,q])$. Thus, $q_\tau =[g,q]$ is a linear symmetry for Equation~\eqref{eq1}. 
\end{remark}

\section{Examples}

Corollary 1 gives rise to the following two examples.

\begin{example} Consider the following graduation on $\mathcal{G}=\mathfrak{sl}_{2}$:

$$
\mathcal{G}_{-1}=\left\{\begin{pmatrix}
  0 & 0\\
  v & 0\\
\end{pmatrix}\right\}, \quad
\mathcal{G}_{0}=\left\{\begin{pmatrix}
  w & 0\\
  0 & -w\\
\end{pmatrix}\right\}, \quad
\mathcal{G}_{1}=\left\{\begin{pmatrix}
  0 & u\\
  0 & 0\\
\end{pmatrix}\right\}.$$
The equation \eqref{7} can be rewritten as the system of ODEs
\begin{equation}\label{sys0}
\begin{cases}
  w_t&=( \alpha_{1} - \alpha_{-1}) \,u v,\\
  u_t&= 2 (\alpha_{0} - \alpha_{1})\,u w, \\ 
  v_t&= 2 ( \alpha_{-1} - \alpha_{0} )\,v w. 
\end{cases}
\end{equation}
\noindent The system has the following linear infinitisemal symmetry
$$
\begin{cases}
  w_\tau &= 0,\\
  u_\tau&= u, \\
  v_\tau&= -v,
\end{cases}
$$
and possesses expected first integral (see Remark 1): 
$$
 H_1=\frac{1}{2} Tr(q^2)=u v+w^2.
$$
Note that the infinitesimal symmetry is generated by the the scalling symmetry group $u\to k u, $  $v\to k^{-1} v, $  $w\to w. $

System (\ref{sys0}) can be easily integrated in different ways. We shall demonstrate the Lie algorithm by this simple example. 
To apply it for an ODE system in $n$ variables, we need in total $n$ symmetries (including the initial system) and first integrals such that: 1) all symmetries commute each other; 2) all first integrals are first integrals for all symmetries as well. In our case we have two symmetries and one integral $H_1.$ It is easy to see that  
$\frac{d H_1}{d \tau}=0$.  Using the identity $H_1=C$, we eliminate $w$ and get

\begin{equation} \label{new}
\begin{cases}
  u_t&=  2 (\alpha_{0} - \alpha_{1})\, \,u \sqrt{C - u v }, \\
  v_t&= 2 ( \alpha_{-1} - \alpha_{0} ) \,v \sqrt{C - u v },
\end{cases}\qquad 
\begin{cases}
  u_\tau&= u, \\
  v_\tau&= -v.
\end{cases}
\end{equation}
Now we have to find a transformation  $\bar u=\varphi (u,v), \, \bar v=\psi (u,v)$ such that
\begin{gather*}
\begin{cases}
  \varphi_t&= 1, \\
  \psi_t&= 0.
\end{cases}\qquad
\begin{cases}
  \varphi_\tau &= 0, \\
  \psi_\tau &= 1.
\end{cases}
\end{gather*}
It is easy to see that the functions $\varphi, \psi$ satisfy the following overdetermined systems of PDEs

\begin{gather*}
\begin{cases}
 2 ((\alpha_{0} - \alpha_{1})\, \varphi_{u} u + ( \alpha_{-1} - \alpha_{0} )\,\varphi_{v} v) \sqrt{C - u v } = 1, \\
 \varphi_{u} u - \varphi_{v} v = 0,
\end{cases}\\
\begin{cases}
 2 ((\alpha_{0}-\alpha_{1})   \psi_{u} u + (\alpha_{-1} - \alpha_{0}) \psi_{v} v) \sqrt{C - u v }= 0, \\
 \psi_{u} u - \psi_{v} v = 1.
\end{cases}
\end{gather*}

From here we find all partial derivatives of $\varphi$ and $\psi$. The existence of the functions $\varphi$ and $\psi$ follows from the compatibility of $(\ref{new})$. Integrating, we obtain that
\begin{gather*}
\varphi (u,v) = \frac {\artanh{ \sqrt{1 - \frac{u v}{C}}}}{  (\alpha_{1}-\alpha_{-1})  \sqrt{C} }+ k_1, \\
\psi (u,v) = \frac {\alpha_{0}-\alpha_{-1}}{\alpha_{1}-\alpha_{-1}} \ln {(\alpha_{1}-\alpha_{-1}) u}+\frac {\alpha_{0}-\alpha_{1}}{\alpha_{1}-\alpha_{-1}} \ln {(\alpha_{1}-\alpha_{-1}) v}+k_2.
\end{gather*}

The functions $u(t)$ and $v(t)$ have to be found from $\psi(u,v)=t, \, \varphi(u,v)=0.$ Finally, we find
\begin{gather*}
u =  z k_3 \, \left(\cosh{z (\alpha_1-\alpha_{-1}) (t-k_1)}\right)^{\frac{2\, (\alpha_0-\alpha_{1})}{\alpha_1-\alpha_{-1}}}, \\
v = \frac{z}{k_3} \, \left(\cosh{z (\alpha_1-\alpha_{-1}) (t-k_1)}\right)^{\frac{2\, (\alpha_0-\alpha_{-1})}{\alpha_{-1}-\alpha_{1}}},\\
w = z \, \tanh{z (\alpha_1-\alpha_{-1}) (t-k_1)}, 
\end{gather*}
where $z, k_1, k_3$ are arbitrary constants. 
Here $z^2=C$ and the expression of $k_3$ in terms of $k_1,k_2,C$ is too large to be presented here.
From these formulas it follows that for general values of $\alpha_i$ the functions $u(t)$ and $v(t)$ have movable branching points in the complex plane.
\end{example}

Below we present not so trivial examples related to the Lie algebra $so(3,1)$. 
Integrable Hamiltonian dynamical systems with quadratic right hand sides associated with semi-simple Lie algebras have been considered in   \cite{miscfom, reysem, boma1}. The complex isomorphic algebras $so(4)$ and $so(3,1)$ are  interesting from a viewpoint of applications.   Integrable Hamiltonian systems associated with these algebras were studied in \cite{vesel, sokso4, sokwolf}.  
We present two new examples related to $so(3,1)$. 
Systems from these examples apparently are non-Hamiltonian.  
We find sufficiently many first integrals and infinitesimal symmetries to claim that these systems are locally integrable by the algorithm of S. Lie (see Example 1).

\begin{example} 
Let $\mathcal{G}=\{A\in \R_{4\times 4}|\,A^{*}= -A\},$ where the involution is defined by the formula  $A^{*}=TA^{t}T^{-1}$, $T=e_{11}+e_{22}+e_{34}+e_{43}$, the superscript $t$ means the matrix transposition, and   $e_{ij}$ denote the matrix unities.  If we take $T=e_{11}+e_{22}+e_{33}-e_{44}$ then the condition of skew-symmetricity with respect to the involution defines the algebra   $so(3,1).$ The isomorphism between  $\mathcal{G}$ and  $so(3,1)$ is given by  
$ \bar q=B^{-1} \,  q \, B$, where $q \in \mathcal{G},\, \bar q \in so(3,1)$, and the matrix $B$ equals
$$
 e_{11}+e_{22}+ \frac{\sqrt{2}}{2} (e_{33}+e_{34}+e_{43}-e_{44}).
$$ 

Define a 3-graduation on the algebra $\mathcal{G}$. Generic element of the algebra has the form  
$$
 q = \begin{pmatrix}
  0 & w_{1} &u_{1}&v_{1}\\
  -w_{1} & 0 &u_{2}&v_{2}\\
  -v_{1} & -v_{2} &w_{2}&0\\
  -u_{1} & -u_{2} &0&-w_{2}\\
\end{pmatrix}.
$$
It is easily verified that the decomposition of  $\mathcal{G}$ into the sum of the following components 
\begin{gather*}
 \mathcal{G}_{0} = \left\{ w_1 (e_{12}-e_{21})+w_2 (e_{33}-e_{44}) \right\},\\
 \mathcal{G}_{1} = \left\{ u_1 (e_{13}-e_{41})+u_2 (e_{23}-e_{42}) \right\},\\
 \mathcal{G}_{-1} = \left\{ v_1 (e_{14}-e_{31})+v_2 (e_{24}-e_{32}) \right\}
\end{gather*}
equips $\mathcal{G}$ with  a structure of a 3-graded Lie algebra.

It can be easily verified that equation \eqref{7} can be written as the system of ODEs
$$
\begin{cases}
  {w_1}_t&= (\alpha_{1} - \alpha_{-1}) \,(u_2 v_1 - u_1 v_2),\\
  {u_1}_t&= (\alpha_{0} - \alpha_{1}) \,(u_2 w_1 - u_1 w_2), \\
  {v_1}_t&= (\alpha_{0} - \alpha_{-1}) \,(v_2 w_1 + v_1 w_2), \\
  {w_2}_t&= (\alpha_{1} - \alpha_{-1}) \,(u_1 v_1 + u_2 v_2),\\
  {u_2}_t&= (\alpha_{1} - \alpha_{0}) \,(u_1 w_1 + u_2 w_2), \\
  {v_2}_t&= (\alpha_{-1} - \alpha_{0}) \,(v_1 w_1 - v_2 w_2).
\end{cases}$$

The system possesses the following linear and quadratic infinitesimal symmetries:  
\begin{gather*}
\begin{cases}
  {w_1}_{\tau_{1}} &= 0,\\
  {u_1}_{\tau_{1}} &= -u_1, \\
  {v_1}_{\tau_{1}} &= v_1 , \\
  {w_2}_{\tau_{1}} &= 0,\\
  {u_2}_{\tau_{1}} &= -u_2 , \\
  {v_2}_{\tau_{1}} &= v_2.
\end{cases} \qquad
\begin{cases}
  {w_1}_{\tau_{2}} &= 0,\\
  {u_1}_{\tau_{2}} &= u_2, \\
  {v_1}_{\tau_{2}} &= v_2 , \\
  {w_2}_{\tau_{2}} &= 0,\\
  {u_2}_{\tau_{2}} &= -u_1 , \\
  {v_2}_{\tau_{2}} &= -v_1.
\end{cases} \qquad
\begin{cases}
  {w_1}_{\tau_{3}} &= (- \alpha_{1} + \alpha_{-1}) \,(u_1 v_1 + u_2 v_2),\\
  {u_1}_{\tau_{3}} &= (- \alpha_{0} + \alpha_{1}) \,(u_1 w_1 + u_2 w_2), \\
  {v_1}_{\tau_{3}} &= (\alpha_{0} - \alpha_{-1}) \,(v_1 w_1 - v_2 w_2), \\
  {w_2}_{\tau_{3}} &= (- \alpha_{1} + \alpha_{-1}) \,(u_1 v_2 - u_2 v_1),\\
  {u_2}_{\tau_{3}} &= (\alpha_{1} - \alpha_{0}) \,(u_2 w_1 - u_1 w_2), \\
  {v_2}_{\tau_{3}} &= (\alpha_{0} - \alpha_{-1}) \,(v_2 w_1 + v_1 w_2).
\end{cases}
\end{gather*}

This system has also two polynomial first integrals of degree two: 
$$
 I_1=w_1 w_2 + u_1 v_2 - v_1 u_2
$$
and
$$
I_2=w_1^2+ 2 u_1 v_1 - w_2^2 + 2 u_2 v_2.
$$
The traces of powers of  $q$ are polynomials in these two integrals. It is easy to verify that all symmetries commute each 
other and that $I_1, I_2$ are the first integrals for all symmetries. So, the Lie algorithm is applicable. $\square$
\end{example}

In the next example the algebra $\mathcal{G}$ is not $\Z$-graded.
\begin{example} Let $\mathcal{G}$ be the same algebra as in Example 2. Now we consider
\begin{gather*}
\mathcal{G}_{0} = \left\{
 \begin{pmatrix}
 0 & w_1 & 0 & 0 \\
 -w_1 & 0 & 0 & 0 \\
 0 & 0 & k w_1 & 0 \\
 0 & 0 & 0 & -k w_1
 \end{pmatrix} \right\},\\
M = \left\{ \begin{pmatrix}
 0 & lw_2 & 0 & v_1 \\
 -lw_2 & 0 & 0 & v_2 \\
 -v_1 & -v_2 & w_2 & 0 \\
 0 & 0 & 0 & -w_2
 \end{pmatrix} \right\},\\
N = \left\{ \begin{pmatrix}
 0 & 0 & u_1 & 0 \\
 0 & 0 & u_2 & 0 \\
 0 & 0 & 0 & 0 \\
 -u_1 & -u_2 & 0 & 0
 \end{pmatrix} \right\},
\end{gather*}
where $k$ and $l$ are parameters such that $kl \ne 1$. It is easy to check that all conditions of the Theorem hold.
Equation \eqref{eq1} can be rewritten in the form:

$$
\begin{cases}
  {w_1}_t&= \dfrac{\alpha_{1} - \alpha_{-1}}{1-k l} \,(-l u_1 v_1 + u_2 v_1 - u_1 v_2 - l u_2 v_2),\\ 
  {u_1}_t&= (\alpha_{1} - \alpha_{0}) \, (k u_1 - u_2 ) w_1 + (\alpha_{1} - \alpha_{-1})  (u_1 - l u_2) w_2,\\
  {v_1}_t&= (\alpha_{0} - \alpha_{-1}) \,(k v_1 + v_2) w_1, \\ 
  {w_2}_t&= \dfrac{\alpha_{1} - \alpha_{-1}}{1-k l} \,(u_1 v_1 - k u_2 v_1 + k u_1 v_2 + u_2 v_2),\\ 
  {u_2}_t&= (\alpha_{1} - \alpha_{0}) \, (u_1 + k u_2 ) w_1 + (\alpha_{1} - \alpha_{-1})  (l u_1 +  u_2) w_2,\\ 
  {v_2}_t&= (\alpha_{-1} - \alpha_{0}) \,(v_1 - k v_2) w_1.
\end{cases}$$

This system possesses the same linear infinitesimal symmetries as the system in Example~2.
It also has the following quadratic symmetry:
$$
\begin{cases}
{w_1}_\tau &= -\dfrac{(l^2+1)(u_1 v_1 - k u_2 v_1 + k u_1 v_2 + u_2 v_2)}{kl - 1},\\
\\
{u_1}_\tau &= (k^2+1)(l u_2 w_1 -  u_1 w_1) - \\
& - \dfrac{k \,(\alpha_0 + \alpha_1 - 2 \alpha_{-1}) + 2 l\,(\alpha_1 - \alpha_{-1} )+  k l^2\,(\alpha_0  - \alpha_1) }{\alpha_1 - \alpha_{-1}} u_1 w_2 + \\
& + \dfrac{\alpha_0 - \alpha_1 + 2 k l\,(\alpha_1 - \alpha_{-1} ) + l^2\,(\alpha_0  + \alpha_1 - 2 \alpha_{-1})}{\alpha_1 - \alpha_{-1}} u_2 w_2,\\
\\
{v_1}_\tau &= -\dfrac{(\alpha_{-1} - \alpha_0) (l^2+1) (k v_1 + v_2) w_2}{\alpha_1 - \alpha_{-1}},\\
\\
{w_2}_\tau &= \dfrac{k^2 + 2 k l - 1}{k l - 1} (u_1 v_2 - u_2 v_1)
        - \dfrac{k^2 l - 2k - l}{k l - 1} (u_1 v_1 + u_2 v_2),\\
\\        
 {u_2}_\tau &= -(k^2 + 1) (l u_1 w_1 + u_2 w_1) - \\
         &-\dfrac{\alpha_0 - \alpha_1 + 2 k l\,(\alpha_1  - \alpha_{-1}) +  l^2\,(\alpha_0 + \alpha_1 - 2 \alpha_{-1})}{\alpha_1 - \alpha_{-1}} u_1 w_2 - \\
        & -\dfrac{k (\alpha_0 + \alpha_1 - 2 \alpha_{-1}) + 2 l\,(\alpha_1 - \alpha_{-1}) + k l^2\,(\alpha_0  - \alpha_1)}{\alpha_1 - \alpha_{-1}} u_2 w_2,\\
\\
{v_2}_\tau &= \dfrac{(\alpha_{-1} - \alpha_0) (l^2 + 1) (v_1 - k v_2) w_2}{\alpha_1 - \alpha_{-1}}.
\end{cases}
$$

Moreover, the system has the following quadratic first integrals: 
\begin{gather*}
 I_1=-2 u_1 v_1 - 2 u_2 v_2 +  (k w_1 + w_2)^2 -  (w_1 + l w_2)^2, \\[2mm] 
\qquad I_2 = u_2 v_1 - u_1 v_2 - (k w_1 + w_2) (w_1 + l w_2).
\end{gather*}
\noindent Notice that $I_1$ is the trace of $q^2$. It is easy to verify that the Lie algorithm is applicable. $\square$
 
\end{example}

\begin{remark} Note that some first integrals for the systems from Examples 1-3   are not single valued in the complex plane. For instance, $$
 H_2= u^{\alpha_{0}-\alpha_{-1}} v^{\alpha_{0}-\alpha_{1}}
$$
is a first integral for (\ref{sys0}). 
 The system from Example 2 possesses first integrals of the form $I=f(u_1,u_2,v_1,v_2).$ It is easy to check that such integrals should satisfy two equations of the form $X(f)=0$ and $Y(f)=0,$ where $X,Y$ are given by:
\begin{gather*}
X=(\alpha_{1} - \alpha_{0})\, u_1 \frac{\partial}{\partial u_{2}}+(\alpha_{-1} - \alpha_{0})\, v_1 \frac{\partial}{\partial v_{2}} + (\alpha_{0} - \alpha_{1}) \, u_2 \frac{\partial}{\partial u_{1}} + (\alpha_{0} - \alpha_{-1}) \, v_2 \frac{\partial}{\partial v_{1}},\\
Y=(\alpha_{1} - \alpha_{0})\, u_1 \frac{\partial}{\partial u_{1}} + (\alpha_{0} - \alpha_{-1}) \, v_1 \frac{\partial}{\partial v_{1}}+(\alpha_{1} - \alpha_{0}) \,u_2 \frac{\partial}{\partial u_{2}}+(\alpha_{0} - \alpha_{-1}) \, v_2 \frac{\partial}{\partial v_{2}}.
\end{gather*}
Solving this system of two PDEs, we find it's simplest solutions  
\begin{gather*}
 I_3=\tg\left(\frac{\alpha_{-1} - \alpha_{0}}{\alpha_{1} - \alpha_{0}} \arctg\frac{u_1}{ u_2}-\arctg{\frac{v_1}{ v_2}}\right),\\
 I_4=(u_1^2+u_2^2)^{\frac{\alpha_{-1} - \alpha_{0}}{\alpha_{1} - \alpha_{0}}} (v_1^2+v_2^2) \sin^2\left(\frac{\alpha_{-1} - \alpha_{0}}{\alpha_{1} - \alpha_{0}} \arctg\frac{u_2}{ u_1}-\arctg{\frac{v_2}{ v_1}}\right).
\end{gather*}
\end{remark}

It is remarkable that  the number of polynomial symmetries and first integrals in Examples 1-3 is equal to the number of independent variables and we may apply the Lie algorithm to integrate the systems in quadratures. We don't know whether it is true for all systems described in Theorem of Section 2.  

\section*{Acknowledgements.}
The authors are deeply grateful to I.~Z.~Golubchik and V.~V.~Sokolov for useful discussions and constant attention to this work, and to A.~I.~Zobnin for the attention and the assistance in working with text.  The research of O.~S. was partially supported by the RFBR grant 11-01-00341-a.

\end{document}